\newtheorem{theorem}{Theorem}
\newtheorem{lemma}[theorem]{Lemma}
\theoremstyle{definition}
\theoremstyle{remark}
\theoremstyle{corollary}
\newtheorem{corollary}{Corollary}
\newcommand{\A}{{\mathcal A}}
\newcommand{\Ro}{{\mathcal R}}
\renewcommand{\epsilon}{\varepsilon}
\begin{document}
	\title{Duplication with transposition distance to the root for $q$-ary strings}
	\author{%
		\IEEEauthorblockN{\textbf{Nikita~Polyanskii}\IEEEauthorrefmark{1}, and \textbf{Ilya~Vorobyev}\IEEEauthorrefmark{2}\IEEEauthorrefmark{3}}
		\IEEEauthorblockA{\IEEEauthorrefmark{1}Institute for Communications Engineering, \\
			Technical University of Munich\\
			Munich, Germany 80333}	
		\IEEEauthorblockA{\IEEEauthorrefmark{2}Center for Computational and Data-Intensive Science and Engineering, \\
			Skolkovo Institute of Science and Technology\\
			Moscow, Russia 121205}
		\IEEEauthorblockA{\IEEEauthorrefmark{3}Advanced Combinatorics and Complex Networks Lab, \\
			Moscow Institute of Physics and Technology\\  Dolgoprudny, Russia 141701}
	%
		\IEEEauthorblockA{\textbf{Emails}: nikita.polyansky@gmail.com, vorobyev.i.v@yandex.ru}
	}
	
	\maketitle
	
	\begin{abstract}
	We study the duplication with transposition distance between strings of length $n$ over a $q$-ary alphabet and their roots. In other words, we investigate the number of duplication operations of the
	form $x = (abcd) \to y = (abcbd)$, where $x$ and $y$ are strings and
	$a$, $b$, $c$ and $d$ are their substrings, needed to get a $q$-ary string of length $n$ starting from the set of strings without duplications. For exact duplication, we prove that the maximal distance between a string of length at most $n$ and its root has the asymptotic order $n/\log n$. For approximate duplication, where a $\beta$-fraction of symbols may be duplicated incorrectly, we show that the maximal distance has a sharp transition from the order $n/\log n$ to $\log n$ at $\beta=(q-1)/q$. The motivation for this problem comes from genomics, where such duplications represent a special kind of mutation and the distance between a given biological sequence and its root is the smallest number of transposition mutations required to generate the sequence.
	\end{abstract}
	\begin{IEEEkeywords}
		Duplication with transposition, DNA codes, combinatorics on words, de
		Bruijn sequences
	\end{IEEEkeywords}
	
		\section{Introduction}
 A genome sequence is the complete list of the nucleotides (A, C, G, and T for DNA genomes) that construct all the chromosomes of an individual or a species. Genomes are subjects to mutations. In this paper, we focus on a special type of mutation called duplication with transposition or  duplication mode IV according to classification~\cite{freeling2009bias} which create repeats by duplicating a substring
 and inserting the copy somewhere to the right from the original (e.g.,
 $A\underline{CG}T\to A\underline{CG}T\underline{CG}$). Such duplications are found in all eukaryotic genomes and constitute $25-40\%$ of most mammalian genomes.   Transposition-duplication differs from tandem repeat DNA that comes right after one another, they are dispersed throughout the genome and need not to be adjacent. The string that repeats can vary depending on the type of organism, and many other factors. 
 
 Any kind of repetitions in strings are fundamental objects in word combinatorics~\cite{lothaire1997combinatorics}, and a better understanding of these operations might give many applications in different fields, e.g., string matching algorithms, molecular biology, bioinformatics, and data compression~\cite{sayood2017introduction}. 
 
\subsection{Related Work}

A square in a string is a non-empty substring of the form $bb$. It is well-known that every binary string of length at least $4$ has a square (a tandem repeat), and a famous result by Thue~\cite{thue1906uber}
states that for alphabets of size at least $3$, there is an infinite
sequence without tandem repeats (of any length). This is evidently not true for repeats with transposition. 

Let us give a few additional notions to present an analogue of Thue's result for duplication with transposition. A repetition in a string $x$ is a pair of strings $(\hat{b},b)$ such that $\hat{b}b$ is a substring of $x$, $\hat{b}$ is non-empty, and $b$ is a prefix of $\hat{b}b$. The exponent of a repetition $(\hat{b},b)$ is $\frac{|b|+|\hat{b}|}{|\hat{b}|}$, where $|y|$ denotes the length of a string $y$. Tandem repeats are thus repetitions with exponent $2$. Dejean conjectured~\cite{dejean1972theoreme} that for a $q$-ary alphabet, there is an infinitely large $q$-ary sequence without repetitions with exponent larger than $r_q$, where $r_q:=q/(q-1)$ for $q=2$ and $q\ge5$, and $r_3:= 7/4$, $r_4:=7/5$. In other words,  such a string $y$ could not be represented in the form $y = (abcbd)$ if we require the property 
$$
\frac{2|b|+|c|}{|b|+|c|}>r_q \quad \Leftrightarrow \quad |c|< \frac{2-r_q}{r_q - 1}|b|,
$$
which means that $y$ could not be obtained from $x=(abcd)$ by a duplication with a relatively short transposition. Dejean's conjecture has successively been proved in~\cite{dejean1972theoreme,rao2011last, currie2011proof,carpi2007dejean}.

Tandem repeats have been studied extensively in recent years. One of the most
relevant papers to this work is the study of tandem duplication distance to the root for the binary strings~\cite{alon2017duplication}. It has been shown that to generate a binary
string of length $n$ starting from a square-free string from
the set  $\{0,1,01,10,101,010\}$, one needs a linear with $n$ number of tandem duplication operations. For the case of approximate duplication, where
a $\beta$-fraction of symbols may be duplicated incorrectly, the authors showed that the required number of duplications has a sharp transition from linear in
$n$ to logarithmic at $\beta= 1/2$. As the nature of duplications with transposition is a slightly different from one of tandem repeats, we could not apply ideas from~\cite{alon2017duplication} in a straightforward manner. However, after some modification of methods proposed in~\cite{alon2017duplication} we will show that some results can be transferred to our case.

In~\cite{lenz2019duplication, kovavcevic2018asymptotically}, the authors have proposed several constructions that correct tandem duplication errors of fixed length and derived a sphere-packing like bound. Error-correcting
codes for combating tandem duplication errors of short fixed length have been
introduced in~\cite{jain2017duplication}. The latter has been further investigated in~\cite{chee2018efficient, yehezkeally2018reconstruction, kovavcevic2019zero}.

	\subsection{Notation}
	Let us abbreviate a $q$-ary alphabet $\{0,1,\ldots, q-1\}$ by $\A_q$.	We denote $x=(x_1,\ldots, x_n)$ to be a string of length $|x|=n$ over $\A_q$. We say that $y$ is a substring of $x$ and $x$ has a distinguishable substring $y$ at position $i$ (for short, we write $x$ has $y^{(i)}$) if $y=(x_i,\ldots,x_j)$. A duplication of length $\ell$ at position $p$, $0\le p\le n-\ell$, with transposition $t$, $0\le t\le n-\ell - p$ in a string $x=(a b c d)$, with $|a|=p$, $|b|=\ell$, $|c|=t$, is defined by $f_{p,\ell,t}(x)=(abcbd)$. In what follows, we will consider all possible duplications of length at least $1$ with  admissible positions and transpositions. It is also helpful to define the inverse operation called deduplication with transposition, namely: the result of this operation applied to $y=(abcbd)$ is $x=(abcd)$, where $a$, $b$, $c$ and $d$ are substrings of $y$, and $b$ is non-empty.

Let $\A_q^{s}$ be the set of all strings over $\A_q$ of length $s$ and $\A_q^{\le n}$ be the union of $\A_q^{s}$ for $s$ that ranges from $1$ to $n$. Define the directed graph $G^{(n)}=(V,E)$ with the vertex set $V:=\A_q^{\leq n}$ such that there is an edge $(u,v)$ in $E$, $u,v\in V$, whenever $v$ is a result of duplication with transposition starting from $u$. The edge $(u,v)$ is considered to be directed from $u$ to $v$. We say vertex $u$ is a parent of $v$ if there is an edge $(u,v)\in E$. Vertex $v$ is called a child of $u$ if $u$ is a parent of $v$. We note that the set of vertices with indegree $0$ constitutes the set of roots abbreviated by 
$$
\Ro:=\bigcup\limits_{s=1}^q\{x\in\A_q^s: \, x_i\neq x_j  \text{ for any }i\neq j\}.
$$
Moreover, any vertex $v$ in graph $G^{(n)}$ can be reached by some path from the unique vertex from $\Ro$ which is called the root of $v$ and denoted by $root(v)$. Indeed, $root(v)$ coincides with the sequence of symbols from $\A_q$ appearing in $v$ for the first time when looking from left to right.

For any $v$, we define $f(v)$ to be the distance between $v$ and its root in graph $G^{(n)}$, i.e. the length of the shortest path in $G^{(n)}$ connecting $root(v)$ and $v$. The quantity of interest  is the maximum of distances over all strings of length at most $n$, that is
$$
f(n):=\max\limits_{v\in\A_q^{\le n}}f(v).
$$
To illustrate the values of $f(n)$ for the binary case in the range $1\le n \le 32$, we depict Table~\ref{tab:1}.
\begin{table}[t]
	\centering
	\caption{The maximum duplication distance to the root for the binary case and $1\le n \le 32$.}
	\label{tab:1}
	\begin{tabular}{||c||c|c|c|c|c|c|c|c|}
		\hline 
		$n$&  1&  2&  3&  4&  5&  6&  7&  8\\ 
		\hline 
		$f(n)$&  0&  1&  2&  2&  3&  4&  5& 5 \\ 
		\hline 
		\hline 
		$n$&  9&  10&  11&  12&  13&  14&  15& 16 \\ 
		\hline 
		$f(n)$& 6 & 6 & 7 & 7 & 8 & 8 & 8 & 9 \\ 
		\hline 
		\hline
		$n$&  17&  18&  19& 20 & 21 & 22 & 23 & 24 \\ 
		\hline 
		$f(n)$& 9 & 9 &  9&  10& 10 & 10 & 10 & 11 \\ 
		\hline
		\hline 
		$n$&  25& 26 & 27 & 28 & 29 & 30 & 31 & 32 \\ 
		\hline 
		$f(n)$& 11 & 11 & 11 & 12 & 12 & 12 & 12 & 13 \\ 
		\hline 
	\end{tabular} 	
\end{table}

The relative Hamming distance between two strings of equal length is the ratio between the number of positions at which the corresponding symbols are different and length of strings. Now let us define the  $\beta$-approximate duplication distance. To this end, suppose the directed graph $G_{\beta}^{(n)}=(V,E_{\beta})$ has the vertex set $V:=\A_q^{\leq n}$ and there is an edge $(u,v)$ in $E_{\beta}$, $u,v\in V$, whenever $v=(abc\hat{b}d)$, $u=(abcd)$ and the relative Hamming distance between $b$ and $\hat{b}$ is at most $\beta$. The length of the shortest path between a string $v$ and any
of its roots in $G^{(n)}_{\beta}$ is denoted by $f_{\beta}(v)$, and the maximum of $f_{\beta}(v)$
over all vertices from $\A_q^{\leq n}$ is denoted by $f_{\beta}(n)$. Clearly, we have $f(n)=f_{0}(n)$ and $f(n)\ge f_{\beta}(n)$ for $\beta>0$.
\subsection{Main Results}
Since for any string $v$, the length of a child of $v$ is at most twice as $|v|$, we have $f_{\beta}(n) \ge \log_2 (n/q)$. 
We strengthen this bound for $\beta = 0$ and present the main result of this paper in
\begin{theorem}\label{zeroDuplication}
	The function $f(n)=\Theta\left(\frac{n}{\log n}\right)$ as $n\to\infty$.  Moreover, we have
	$$
	\frac{n(1+o(1))}{2\log_q n} \le f(n) \le \frac{n(1+o(1))}{\log_q n}\quad \text{for }n\to\infty.
	$$
\end{theorem}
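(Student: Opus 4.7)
The plan is to prove the two inequalities separately. For the upper bound $f(n) \le n(1+o(1))/\log_q n$, I would work with deduplications (reversing them yields a duplication path from $\Ro$) and set $k := \lfloor \log_q n - 3 \log_q \log_q n \rfloor$, so that $q^k = O(n/(\log_q n)^3)$ and $k(q^k+1) = o(n/\log_q n)$. The key step is the pigeonhole claim that any string $u$ of length $\ell \ge k(q^k+1)$ admits a factorization $u = abcbd$ with $|b| = k$: its $\lfloor \ell/k \rfloor \ge q^k+1$ disjoint length-$k$ blocks at positions $1, k+1, 2k+1, \ldots$ cannot all be distinct over an alphabet of $q^k$ length-$k$ strings. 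Each such deduplication $abcbd \mapsto abcd$ removes exactly $k$ symbols. Iterating on $v \in \A_q^n$ reduces the length below $k(q^k+1)$ in at most $\lceil n/k \rceil = n(1+o(1))/\log_q n$ steps; the resulting string of length $o(n/\log_q n)$ then reaches $\Ro$ in at most that many additional (arbitrary-length) deduplications, yielding the claimed upper bound.

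For the lower bound $f(n) \ge n(1+o(1))/(2\log_q n)$, the plan is to count the strings of $\A_q^n$ reachable from some root in at most $t$ duplication steps. Each such $v$ is determined by the root (with $O(1)$ choices) together with a sequence of triples $(p_i, \ell'_i, t_i)_{i=1}^{t'}$, $t' \le t$, satisfying $\ell'_i \ge 1$, $p_i + \ell'_i + t_i \le \ell_{i-1}$, $\ell_i = \ell_{i-1} + \ell'_i$, $\ell_{t'} = n$. For each fixed length profile $(\ell'_1,\ldots,\ell'_{t'})$, the positions $(p_i,t_i)$ can be chosen in at most $\prod_{i=1}^{t'} \binom{\ell_{i-1}+2}{2} \le n^{2t'}$ ways, while the number of length profiles is bounded by the composition count $\binom{n-1}{t'-1}$, which is $2^{o(t\log n)}$ whenever $t = O(n/\log n)$. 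Summing over $t' \le t$ gives at most $2^{o(t\log n)} n^{2t}$ reachable strings. If $f(n) \le t$ this must be at least $q^n$; taking logarithms yields $n \log q \le 2t \log n (1+o(1))$, i.e., $t \ge n(1-o(1))/(2\log_q n)$, as required.

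The main obstacle is matching the leading constants exactly. For the upper bound the scaling $k = \log_q n - \Theta(\log_q \log_q n)$ is essential: pushing $k$ up to $\log_q n$ breaks the pigeonhole hypothesis, while taking $k = c \log_q n$ with $c < 1$ yields only the looser prefactor $1/c > 1$; the "$3 \log_q \log_q n$" correction is what ensures both that the first phase contributes $(1+o(1)) n/\log_q n$ and that the second phase is negligible. For the lower bound, the improvement from $1/3$ to $1/2$ hinges on recognizing that the increment $\ell'_i$ is \emph{cheap}, its choices forming a set of subexponential size $2^{o(t \log n)}$, whereas each of $p_i, t_i$ genuinely costs a factor $n$; a naive $n^{3t}$ bound treating all three parameters symmetrically would yield only $f(n) \ge n/(3\log_q n)$.
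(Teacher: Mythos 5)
Your proposal is correct, and both halves take genuinely different routes from the paper. For the upper bound, the paper does not use your disjoint-block pigeonhole for the exact case; instead it proves a sharper combinatorial lemma (Lemma~\ref{Mini invariant}): among the $n-k+1$ \emph{overlapping} occurrences of length-$k$ substrings, one finds two non-intersecting occurrences of the same word under the weaker hypothesis $n-k+1>2q^k+k^2q^{k/2}/2$, via an analysis of periodic words (exponent $\ge 2$), and then iterates with $k$ re-chosen as the string shrinks by factors of $q$, summing a geometric-type series. Your version needs $n\gtrsim kq^k$ rather than $n\gtrsim 2q^k$, which is why you must retreat to $k=\log_q n-3\log_q\log_q n$; the $\log\log$ correction washes out and both yield the prefactor $1$. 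Your block-splitting device is in fact exactly what the paper uses later in Lemma~\ref{Mini beta invariant} for approximate duplication, so your argument is simpler and unifies the two cases at no asymptotic cost. For the lower bound the divergence is more substantial: the paper is constructive, showing via a potential-function lemma (Lemma~\ref{Semi invariant}) that each deduplication destroys at most $2(k-1)$ of the distinct length-$k$ substrings, and then feeding in de Bruijn sequences of order $k$, which have $q^k-k+1$ such substrings; this exhibits explicit strings attaining $n/(2\log_q n)$. Your counting argument is non-constructive but shows the stronger statement that \emph{most} strings are this far from their root, and your observation that the length increments form a composition of $n$ (only $2^{o(t\log n)}$ choices) rather than costing a factor $n$ each is precisely the refinement the paper forgoes in its own counting argument of Section~\ref{betaLowerDuplication}, where it settles for the constant $1/3$; your method would improve that corollary to $(1-H_q(\beta))/2$. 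One small point to tidy: the claim that $\binom{n}{t'}=2^{o(t\log n)}$ presupposes $t=O(n/\log n)$, so the argument should be run contrapositively (assume $f(n)<(1-\epsilon)n/(2\log_q n)$ and derive a contradiction), but this is routine.
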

As for the case of approximate duplication, we show that there is a sharp transition of $f_{\beta}(n)$ at $\beta = (q-1)/q$.  
\begin{theorem}\label{betaDuplication}
	For any fixed $\beta\in(0,1)\setminus\{(q-1)/q\}$ and $n\to\infty$, the function $f_{\beta}(n)$ satisfies
	$$
	f_{\beta}(n)=\begin{cases}
	\Theta\left(\frac{n}{\log n}\right)\quad &\text{if }\beta<(q-1)/q,\\
	\Theta(\log n) \quad &\text{if }\beta>(q-1)/q.
	\end{cases}
	$$
\end{theorem}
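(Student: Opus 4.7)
The plan is to split Theorem~\ref{betaDuplication} into its two regimes and to dispense first with the two inequalities that are already covered. By monotonicity, $f_\beta(n)\le f_0(n)=f(n)$, so Theorem~\ref{zeroDuplication} supplies the upper bound $f_\beta(n)=O(n/\log n)$ whenever $\beta<(q-1)/q$; the length-doubling observation $f_\beta(n)\ge\log_2(n/q)$ stated just before the theorem supplies the lower bound $\Omega(\log n)$ whenever $\beta>(q-1)/q$. Only two estimates remain, and I address them in turn.

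For the lower bound when $\beta<(q-1)/q$, I would use a counting argument. A single approximate duplication $(abcd)\to(abc\hat{b}d)$ is determined by the triple $(|a|,|b|,|c|)\in\{0,\dots,n\}^3$ together with the approximate copy $\hat{b}$; for $|b|=\ell$, the number of admissible $\hat{b}$ is bounded by the Hamming-ball volume $V_q(\ell,\beta\ell)\le q^{H_q(\beta)\ell}$, where $H_q$ denotes the $q$-ary entropy function. Because the duplication lengths $\ell_1,\dots,\ell_t$ of any $t$-step sequence satisfy $\sum_i\ell_i\le n-q$, the number of length-at-most-$n$ strings reachable from the roots in $t$ steps is at most $|\Ro|\cdot n^{3t}\cdot q^{H_q(\beta)n}$. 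Requiring this count to reach $q^n$ forces
\[
t\ \ge\ \frac{(1-H_q(\beta))\,n-\log_q|\Ro|}{3\log_q n}\ =\ \Omega\!\left(\frac{n}{\log n}\right),
\]
since $H_q(\beta)<1$ precisely when $\beta<(q-1)/q$, which yields a specific $v\in\A_q^n$ with $f_\beta(v)=\Omega(n/\log n)$.

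For the upper bound when $\beta>(q-1)/q$, I would construct a short duplication path from a root to any target $v$ by iteratively applying an approximate \emph{de}duplication step supplied by the Plotkin bound. Fix $\beta>(q-1)/q$: the Plotkin bound guarantees a constant $M=M(q,\beta)$ such that any $M+1$ strings in $\A_q^\ell$ (with $\ell$ large enough) contain a pair at Hamming distance at most $\beta\ell$. Given an intermediate string $w$ of length $m$, partition $w$ into $M+1$ consecutive disjoint blocks of length $\ell:=\lfloor m/(M+1)\rfloor$; by the Plotkin statement, two of these blocks $b$ (at position $p_1$) and $\hat{b}$ (at a later position $p_2\ge p_1+\ell$) satisfy $d_H(b,\hat{b})\le\beta\ell$. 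Writing $w=abc\hat{b}d$ with $b$ and $\hat{b}$ at these positions realises an approximate deduplication $w\to abcd$ shrinking the length by $\ell\ge m/(M+2)$. Iterating this contraction drives the length below a $(q,\beta)$-dependent constant in $O(\log n)$ steps, and a further $O(1)$ operations reduce what remains to an element of $\Ro$. Reversing the resulting sequence exhibits a length-$O(\log n)$ duplication path from a root to $v$ in $G_\beta^{(n)}$, proving $f_\beta(v)=O(\log n)$.

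The main obstacle is the large-$\beta$ upper bound. I must check that (i) the two Plotkin-close blocks always occupy positions compatible with the duplication rule $(abcd)\to(abc\hat{b}d)$, which is immediate from the consecutive-disjoint-blocks construction since $p_2\ge p_1+\ell$, and (ii) the recursion terminates at a genuine element of $\Ro$ rather than at an arbitrary short string, which is dealt with by a separate base-case argument for strings of length below the Plotkin threshold. The lower-bound argument, by contrast, is essentially routine once the Hamming-ball volume estimate $V_q(\ell,\beta\ell)\le q^{H_q(\beta)\ell}$ is invoked.
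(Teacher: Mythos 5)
Your proposal is correct and follows essentially the same route as the paper: the small-$\beta$ lower bound by counting the strings reachable in $t$ steps against the Hamming-ball volume $q^{H_q(\beta)n}$, the large-$\beta$ upper bound by partitioning into constantly many blocks and invoking the Plotkin bound to find two $\beta$-close blocks, and the large-$\beta$ lower bound by the length-doubling observation. The only divergence is the small-$\beta$ upper bound, where you invoke the monotonicity $f_{\beta}(n)\le f(n)$ together with Theorem~\ref{zeroDuplication}, while the paper reruns the block-partition argument with $M(q,k,\beta)$ and the Elias--Bassalygo bound to sharpen the constant to $\overline{r}<1$; for the $\Theta(n/\log n)$ claim your shortcut suffices.
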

\subsection{Outline} 
The remainder of the paper is organized as follows. The proof of Theorem~\ref{zeroDuplication} is given in Section~\ref{proofZeroDuplication}.  First, we give a greedy algorithm for finding a long duplicated substring in a string, what enables us to estimate the upper bound on $f(n)$. After that, we present a string that achieves $f(n)$ asymptotically up to a constant factor $1/2$.
We give the proof of upper and lower bounds for Theorem~\ref{betaDuplication} in Section~\ref{betaUpperDuplication} and Section~\ref{betaLowerDuplication}, respectively. Applying some results for error-correcting codes in the Hamming metric, the upper bound on $f_\beta(n)$ is strengthen with respect to $f(n)$. To obtain the lower bound on $f_\beta(n)$, we use some counting arguments. Finally, Section~\ref{conclusion} concludes the paper.  
\section{Proof of Theorem~\ref{zeroDuplication}}\label{proofZeroDuplication}
In this section we present the proof of lower and upper bounds on $f(n)$ in  Section~\ref{zeroUpperDuplication} and Section~\ref{zeroLowerDuplication}, respectively.
\subsection{Upper Bound on Duplication with Transposition Distance}\label{zeroUpperDuplication}
First of all, let us show an achievable length of duplicated substring in a string.
 \begin{lemma}\label{Mini invariant}
	Consider a $q$-ary string $x$ of length $n$ and a positive integer $k$ satisfying $$
	n-k+1> 2q^k + \frac{k^2q^{k/2}}{2}.
	$$
	Then $x$ can be represented as $x=(abcbd)$, where $|b|\ge k$.
\end{lemma}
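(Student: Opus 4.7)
The plan is to prove the contrapositive: assuming that $x$ admits no factorization $x=(abcbd)$ with $|b|\ge k$, I will derive the inequality $n-k+1\le 2q^k+k^2q^{k/2}/2$, contradicting the hypothesis. Introduce the map $\phi\colon\{1,\dots,n-k+1\}\to\A_q^k$ given by $\phi(i)=(x_i,\dots,x_{i+k-1})$, and for each $w\in\A_q^k$ set $S_w=\phi^{-1}(w)$. The first observation is that whenever two positions $i<j$ lie in a common $S_w$ with $j-i\ge k$, the choice $b=w$, $a=x_1\cdots x_{i-1}$, $c=x_{i+k}\cdots x_{j-1}$, $d=x_{j+k}\cdots x_n$ produces the forbidden factorization. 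So under the assumption, each $S_w$ is contained in an interval of length less than $k$, and the problem reduces to bounding $n-k+1=\sum_w|S_w|$ from above.

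The key structural step is that repeated patterns must be periodic. If $\phi(i)=\phi(j)=w$ with $i<j<i+k$, then $w$ has period $j-i$, so any $w$ with $|S_w|\ge 2$ has a well-defined minimum period $d_w\le k-1$. Because consecutive elements of $S_w$ differ by a period of $w$, and hence by at least $d_w$, the diameter bound forces $|S_w|\le 1+(k-1)/d_w$. On the counting side, a length-$k$ pattern admitting period $d$ is determined by its first $d$ symbols, so at most $q^d$ elements of $\A_q^k$ have minimum period equal to $d$, and in particular the number of $w$ with minimum period at most $k/2$ is at most $\sum_{p=1}^{\lfloor k/2\rfloor}q^p\le 2q^{k/2}$.

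The proof concludes by splitting $\sum_w|S_w|$ according to $d_w$. Writing $U_s$ and $U_l$ for the sets of repeated patterns with $d_w\le k/2$ and $d_w>k/2$ respectively, and using that $|S_w|\le 2$ when $d_w>k/2$ (since then $(k-1)/d_w<2$), a short bookkeeping gives
$$
\sum_w |S_w| \;\le\; q^k+|U_l|+\sum_{w\in U_s}|S_w|\;\le\;2q^k+\sum_{w\in U_s}|S_w|,
$$
since $|U_l|\le\sum_{d>k/2}q^d\le q^k$ for $q\ge 2$. Estimating the remaining small-period contribution by $\sum_{d=1}^{\lfloor k/2\rfloor} q^d\,(1+(k-1)/d)$ and bounding it by $k^2 q^{k/2}/2$ then yields the desired inequality, contradicting the hypothesis. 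The main technical subtlety lies in controlling the small-period sum: the product $q^d\cdot(k/d)$ grows quickly in $d$, and the coefficients $2$ and $1/2$ in the lemma suggest one must be attentive to the geometric and harmonic behaviour of the sums so as to land at the stated constants rather than an unnecessarily looser estimate.
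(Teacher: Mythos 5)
Your proposal is correct and follows essentially the same route as the paper's proof: both rest on the dichotomy between patterns of minimal period at most $k/2$ (the paper's ``exponent at least $2$'' words, of which there are at most $\tfrac{k}{2}q^{k/2}$, each occurring at most $k$ times) and the remaining patterns (each occurring at most twice because three mutually overlapping occurrences would force period below $k/2$), yielding the same $2q^k+\tfrac{k^2q^{k/2}}{2}$ bound. Your formulation via the fibers $S_w$ and the refined count $|S_w|\le 1+(k-1)/d_w$ is a slightly sharper packaging of the same argument, and the final estimate you leave implicit does go through with $1+(k-1)/d\le k$ and $\sum_{d\le k/2}q^d\le\tfrac{k}{2}q^{k/2}$.
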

\begin{proof}[Proof of Lemma~\ref{Mini invariant}] Recall that $x$ is said to have a distinguishable substring $y^{(i)}$ if $y$ is a substring of $x$ such that $y=(x_i,x_{i+1},\ldots,x_{i+|y|-1})$. Also we say that two distinguishable substrings $y^{(i)}$ and $z^{(j)}$ are intersecting if either $i<j$ and $i+|y|-1\ge j$ or $i\ge j$ and $i\le j +|z|-1$. Note that if there are  $>k$ distinct distinguishable substrings $\{b^{(i)},i\in I\}, I\subset [n],|I|>k,$ representing the same string $b$ of length $k$ in $x$, then we can find two of them which are non-intersecting, i.e. $x$ can be viewed as $x=(abcbd)$. Indeed, we can take the most left one distinguishable substring $b^{(i_1)}$ (for $i_1=\min\limits_{i\in I} i$) which can be intersected with at most $(k-1)$ other ones from $\{b^{(i)},i\in I\setminus\{i_1\}\}$.
	
Now we want to strengthen this property for certain strings. Define the period of a string $v=(v_1,\ldots, v_k)$ to be the smallest positive integer $p$ such that $v_i=v_{i+p}$ for all $i$, provided that $1\le i\le k-p$. The rational $k/p$ is called the exponent of $v$. One can easily see the number of strings in $\A_q^k$ with exponent at least $2$ is at most 
	$$
	\sum_{i=1}^{\lfloor k/2\rfloor}q^i\le \frac{k}{2} q^{k/2}.
	$$
	  Observe that if $x$ contains three distinguishable substrings $b^{(j_1)}, b^{(j_2)},b^{(j_3)}$  with $j_1<j_2<j_3$ and $j_3-j_1 < |b|$, then $b$ has exponent larger than $2$. Indead, inequality $j_3-j_1 < |b|$ implies that either $j_2-j_1<|b|/2$ or $j_3-j_2<|b|/2$. Without loss of generality, we assume that $j_2-j_1<|b|/2$ and 
	  $$
	  b_{1+t}=x_{j_1+t}=x_{j_2+t}=b_{j_2-j_1+1+t}
	  $$ for all $t$ provided that $0\le t< |b|-(j_2-j_1)$. Therefore, the exponent of $b$ is at least $|b|/(j_2-j_1)>2$. From this it follows that if $x$ contains at least $3$ distinguishable substrings representing the same $b$ with exponent at most $2$, then we have two of them which are non-intersecting. 
	  
	  The total number distinguishable substrings of length $k$ in $x$ is $n-k+1$. Therefore, the condition
	  	$$
	  	n-k+1> 2q^k + \frac{k^2q^{k/2}}{2}
	  	$$
	   implies that there exist  two distinguishable non-intersecting substrings representing the same string of length $k$ with exponent  either at most $2$, or larger than $2$. This completes the proof.
\end{proof}
For $k\ge14$, the inequality $n\ge 3q^k$ leads to the condition required in Lemma~\ref{Mini invariant}. Thus, Lemma~\ref{Mini invariant} guarantees that for $n\ge 3q^{14}$ and any $v$ in graph $G^{(n)}$ (in what follows, we assume that the length of $v$ is $n$), there is a parent $u$ of $v$ so that 
$$
|u|\le |v|-\log_q( |v|/3) < |v|-\log_q |v| + 2.
$$
 Therefore, the smallest distance between the string $v$ and some string $u$ of length $|u|\le|v|/q$ in graph $G^{(n)}$ is upper bounded by
$$
(|v|-|v|/q)\frac{1}{\log_q(|v|/q)-2}= \frac{(q-1)|v|}{q(\log_q |v|-3)}.
$$
We fix an arbitrary integer $k'>14$ and let $n\ge 3q^{k'}$. Applying iteratively the arguments above until the moment when the length of the parent is at most $3q^{k'}$, we obtain
$$
f(n)\le f(3q^{k'})+\sum_{i=1}^{\lfloor \log_q n \rfloor - k'+1} \frac{n(q-1)}{q^i (\log_q n-i-2)}.
$$
We compare two subsequent term $a_i$ and $a_{i+1}$ in the above sum and bound their ratio as follows
\begin{multline*}
\frac{a_i}{a_{i+1}}=\frac{q(\log_q n -i - 3)}{\log_q n -i - 2} \\ = q\left(1 - \frac{1}{\log_q n -i - 2}\right)
\ge q\frac{k'-4}{k'-3},
\end{multline*}
where we used the fact that $i\le \lfloor \log_q n \rfloor - k' + 1$. Therefore, 
$$
a_{i+1}\le \frac{(k'-3)a_i}{(k'-4) q}
$$
 and 
\begin{multline*}
f(n)\le 3q^{k'}+\sum_{i=1}^{\lfloor \log_q n \rfloor} \frac{n(q-1)(k'-4)^{i-1}}{q^i (\log_q n - 3) (k'-3)^{i-1}}\\
<3q^{k'}+ \frac{n(q-1)}{q (\log_q n - 3)(1-\frac{k'-3}{(k'-4)q})}.
\end{multline*}
Since the last inequality holds for any $k'\ge14$, we derive
\begin{corollary}
For $n\to\infty$, we have
$$
f(n) \le \frac{n}{\log_q n} (1+o(1)).
$$
\end{corollary}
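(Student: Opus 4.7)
The plan is to deduce the corollary as a direct limit of the explicit bound
$$
f(n) < 3q^{k'} + \frac{n(q-1)}{q(\log_q n - 3)\bigl(1 - \tfrac{k'-3}{(k'-4)q}\bigr)}
$$
established immediately above, by letting the tuning parameter $k'$ tend to infinity \emph{after} first letting $n \to \infty$. In other words, the corollary is just the asymptotic reading of the display once one recognizes that the leading constant in front of $n/\log_q n$ can be driven down to $1$ by choosing $k'$ large.

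First I would fix an arbitrary $\epsilon > 0$. A short simplification of the coefficient of $n/(\log_q n - 3)$ shows that it tends to $1$ as $k' \to \infty$: the ratio $\tfrac{k'-3}{(k'-4)q}$ converges to $1/q$, so the whole coefficient converges to $\tfrac{q-1}{q(1-1/q)} = 1$. Hence I can pick a sufficiently large integer $k' = k'(\epsilon) \ge 14$ for which this coefficient is strictly below $1 + \epsilon/2$.

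Next, for this \emph{fixed} $k'$, the additive term $3q^{k'}$ is a constant depending only on $\epsilon$ and is therefore $o(n/\log_q n)$ as $n \to \infty$. Similarly, $\tfrac{1}{\log_q n - 3} = \tfrac{1+o(1)}{\log_q n}$. Combining these two observations, for every $\epsilon>0$ there exists $N(\epsilon)$ such that $f(n) \le (1+\epsilon)\,\tfrac{n}{\log_q n}$ whenever $n \ge N(\epsilon)$, which is exactly the corollary.

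The only delicate point worth highlighting is the order of quantifiers: $k'$ must be chosen first, based on $\epsilon$ alone, and only then is $n$ sent to infinity; otherwise the constant $3q^{k'}$ would be allowed to grow with $n$ and the argument would collapse. Beyond this bookkeeping, the argument is a routine limit manipulation applied to the explicit inequality derived from Lemma~\ref{Mini invariant}, and no new combinatorial input is required.
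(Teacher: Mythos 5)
Your argument is correct and is precisely the deduction the paper intends when it writes ``Since the last inequality holds for any $k'\ge 14$, we derive'' the corollary: fix $\epsilon$, choose $k'$ large enough that the coefficient $\tfrac{q-1}{q\left(1-\tfrac{k'-3}{(k'-4)q}\right)}$ drops below $1+\epsilon$, and then absorb the constant $3q^{k'}$ and the difference between $\log_q n - 3$ and $\log_q n$ into the $o(1)$ term. Your remark about the order of quantifiers is exactly the right point to make explicit; the paper leaves it implicit.
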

\subsection{Lower Bound on Duplication with Transposition Distance}\label{zeroLowerDuplication}
Before we present a family of near-optimal strings, let us give an auxiliary statement which gives a connection between the distance from $y$ to the root and the number of distinct substring in $y$. A similar statement was proved in~\cite[Lemma 2]{alon2017duplication}.
\begin{lemma}\label{Semi invariant}
	Consider a $q$-ary string $y$ and a positive integer
	$k\ge q+1$, and let $N(y,k)$ denote the number of distinct substrings of length $k$ occurring in $y$. Then we have
	$$
	f(y)\ge \frac{N(y,k)}{2(k-1)}.
	$$
\end{lemma}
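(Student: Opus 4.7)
The plan is to prove the inequality by bounding how $N(\cdot,k)$ can grow under a single duplication with transposition, then telescoping along a shortest path from $root(y)$ to $y$. Concretely, I would establish the single-step inequality $N(y,k)\le N(x,k)+2(k-1)$ whenever $y$ is obtained from $x$ by one duplication. Since $|root(y)|\le q<k$ by the hypothesis $k\ge q+1$, we have $N(root(y),k)=0$; iterating the step inequality $f(y)$ times immediately gives $N(y,k)\le 2(k-1)f(y)$, which is the claim.

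For the step inequality, write $y=abcbd$ and $x=abcd$ and let $p=|abc|$, so the inserted copy of $b$ occupies positions $p+1,\dots,p+|b|$ in $y$. I would classify every length-$k$ substring of $y$ by its starting position $i$. If $i+k-1\le p$, the substring lies in the prefix $abc$ of $y$ and is identical to the corresponding substring of $x$; if $i\ge p+|b|+1$, the substring lies in the suffix $d$ and again occurs in $x$; and if $p+1\le i$ and $i+k-1\le p+|b|$, the substring lies entirely inside the inserted copy of $b$, so it is a substring of $b$, which in turn is a substring of $x$ because the first copy of $b$ still appears in $x$. Hence every length-$k$ substring of $y$ that is not already a substring of $x$ must start at a position straddling one of the two insertion seams, between positions $p$ and $p+1$, or between $p+|b|$ and $p+|b|+1$.

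The starting positions straddling the left seam form the integer interval $[p-k+2,p]$ and those straddling the right seam form $[p+|b|-k+2,p+|b|]$; each interval has $k-1$ elements, so the union contains at most $2(k-1)$ positions, giving at most $2(k-1)$ new length-$k$ substrings. The main delicate point, which I expect to be the only technical nuisance, is the regime $|b|<k-1$, in which the two seam intervals overlap and the "inside the inserted $b$" case is vacuous. A direct check shows that the union is then $[p-k+2,p+|b|]$, of size $|b|+k-1<2(k-1)$, so the bound survives uniformly in $|b|$, completing the induction step and the lemma.
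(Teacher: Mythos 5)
Your proof is correct and follows essentially the same route as the paper: a single duplication step creates at most $2(k-1)$ new length-$k$ substrings (precisely those straddling the two seams of the inserted copy of $b$), and telescoping along a shortest path from $root(y)$, using $N(root(y),k)=0$ since $|root(y)|\le q<k$, gives the bound. Your treatment is just a more explicit, position-by-position version of the paper's argument, including the overlapping-seams case $|b|<k-1$, which the paper leaves implicit.
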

\begin{proof}[Proof of Lemma~\ref{Semi invariant}]
	We know that the result of deduplication applied to $y=(abcbd)$ is string $x=(abcd)$. Therefore, we have that $N(x,k)\ge N(y,k) - 2(k-1)$ since the only case in which a substring of length $k$ occurs in $y$, but not in $x$,  is when the substring includes symbols from either both $c$ and $b$ strings in $cb$ part, or both $b$ and $d$ strings in $bd$ part. The latter may happen at most $2(k-1)$ times. Therefore, the shortest path between $y$ and $root(y)$ should have length at least $$
	(N(y,k)-N(root(y),k))/(2(k-1)).
	$$
	Since $root(y)$ has length at most $q$, we have $N(root(y),k)= 0$.
	This completes the proof.
\end{proof}
Let us give a well-known definition of de Bruijn sequences~\cite{lothaire1997combinatorics}. A de Bruijn sequence of order $k$ on $\A_q$ is a cyclic sequence in which every possible string of length $k$ on $\A_q$ occurs exactly once as a substring ($k$ of them appear cyclically). It follows that the length of such a sequence is $q^k$. It is known that de Bruijn sequences exist for any order $k$ and alphabet size $q$.	We illustrate an example of de Bruijn sequence of order $2$ over $\A_3$ by sequence $001022112$. The number of distinct substrings of length $k$ for a de Bruijn sequence of order  $k$ is exactly $n-k+1$, where $n=q^k$. Therefore, this fact accompanied by Lemma~\ref{Semi invariant} leads to
\begin{corollary}
	For any de Bruijn sequence $x$ of order $k$, $k\ge q+1$, we have
	$$
	f(x)\ge \frac{q^k-k+1}{2(k-1)}.
	$$
	Therefore,
	$$
	f(n) \ge \frac{n}{2\log_q n} (1+o(1)) \text{ as }n\to\infty.
	$$
\end{corollary}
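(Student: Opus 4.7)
The plan is to apply Lemma~\ref{Semi invariant} directly to a de Bruijn sequence, so the only real task is to count the distinct length-$k$ substrings of such a sequence.

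First I would verify that a de Bruijn sequence $x$ of order $k$, viewed as a linear string of length $q^k$, satisfies $N(x,k) = q^k - k + 1$. The defining cyclic property guarantees that all $q^k$ cyclic length-$k$ substrings are pairwise distinct; when we write the sequence linearly, the only length-$k$ substrings we lose are the $k-1$ that wrap around the endpoint, and the remaining $q^k - (k-1)$ linear substrings stay pairwise distinct because they already were as cyclic substrings. Plugging this count into Lemma~\ref{Semi invariant} (whose hypothesis $k \ge q+1$ is exactly the corollary's hypothesis) immediately yields
$$
f(x) \ge \frac{N(x,k)}{2(k-1)} = \frac{q^k - k + 1}{2(k-1)}.
$$

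For the asymptotic consequence I would restrict first to $n = q^k$, in which case a de Bruijn sequence of order $k = \log_q n$ is a string of length exactly $n$, and the bound above becomes
$$
f(n) \ge \frac{n - \log_q n + 1}{2\log_q n - 2} = \frac{n(1+o(1))}{2\log_q n}.
$$
For $n$ not of this form, monotonicity $f(n) \ge f(m)$ whenever $m \le n$ (immediate from $\A_q^{\le m} \subseteq \A_q^{\le n}$) lets me apply the bound at the largest $q^k \le n$, which still delivers the $\Omega(n/\log n)$ lower bound required for Theorem~\ref{zeroDuplication}.

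There is no real obstacle here: the proof is essentially a substitution into the already-proved Lemma~\ref{Semi invariant} together with the existence of de Bruijn sequences for every order $k$ and alphabet size $q$. The only point requiring a little care is the off-by-$(k-1)$ bookkeeping when passing from the cyclic to the linear view of the de Bruijn sequence; the $k\ge q+1$ condition is inherited from Lemma~\ref{Semi invariant} and costs nothing asymptotically.
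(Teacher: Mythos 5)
Your proposal is correct and follows the paper's own route exactly: count the distinct length-$k$ substrings of a linear de Bruijn sequence of order $k$ as $q^k-k+1$, substitute into Lemma~\ref{Semi invariant}, and pass to the asymptotic statement via lengths $n=q^k$ (the paper leaves the monotonicity step for general $n$ implicit, which you spell out). No gaps.
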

\section{Proof of Theorem~\ref{betaDuplication}}
We present the proof of lower and upper bounds on $f_{\beta}(n)$ in  Section~\ref{betaUpperDuplication} and Section~\ref{betaLowerDuplication}, respectively. The main idea here is to generalize the methods used in Section~\ref{proofZeroDuplication} and to apply some classic coding theory results.
\subsection{Upper Bound on Approximate Duplication Distance}\label{betaUpperDuplication}
First, we strengthen Lemma~\ref{Mini invariant} when $\beta> 0$ and $k$ is large. A similar method was used for tandem repeats in binary case in~\cite[Theorem 10]{alon2017duplication}.
\begin{lemma}\label{Mini beta invariant}
	Fix real $\beta>0$, and integers $n$ and $q$. Let $M(q,k,\beta)$ be the maximal cardinality of a $q$-ary code of length $k$ with relative Hamming distance $\beta$. Consider a $q$-ary string $x$ of length $n$ and a positive integer $k$ satisfying $n> kM(q,k,\beta)$. Then $x$ can be represented as $x=(abc\hat{b}d)$, where $|b|=|\hat{b}|\ge k$ and $d_H(b,\hat{b}) \le \beta |b|$.
\end{lemma}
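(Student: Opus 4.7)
The plan is to adapt the pigeonhole strategy of Lemma~\ref{Mini invariant}, but instead of grouping identical substrings I would group substrings lying within a Hamming ball of radius $\beta k$, exploiting the maximality of $M(q,k,\beta)$ as a code capacity bound.

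First, I would extract a family of pairwise non-intersecting length-$k$ substrings of $x$ by selecting those at positions $1, k+1, 2k+1, \ldots$, producing $\lfloor n/k\rfloor$ substrings. Under the hypothesis $n > kM(q,k,\beta)$, this family contains more than $M(q,k,\beta)$ blocks. If every pair of these blocks were at Hamming distance strictly greater than $\beta k$, the family would form a $q$-ary code of length $k$ with minimum relative distance exceeding $\beta$, whose size would strictly exceed $M(q,k,\beta)$, a contradiction. Therefore some non-intersecting pair $b^{(i)}$, $\hat{b}^{(j)}$ with $i<j$ satisfies $d_H(b,\hat{b})\le \beta k$; writing $a=(x_1,\ldots,x_{i-1})$, $c=(x_{i+k},\ldots,x_{j-1})$ (possibly empty), and $d=(x_{j+k},\ldots,x_n)$ then yields the decomposition $x=(abc\hat{b}d)$ with $|b|=|\hat{b}|=k$ and $d_H(b,\hat{b})\le \beta |b|$, as required.

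The main obstacle I expect is the tightness of the counting step: the naive non-overlapping block enumeration guarantees $\lfloor n/k\rfloor > M(q,k,\beta)$ only when $n\ge k(M(q,k,\beta)+1)$, which is slightly stronger than the stated condition $n>kM(q,k,\beta)$. This small gap can be closed by also admitting the shifted block at position $n-k+1$ when it is disjoint from the last chosen block, so that the candidate family grows by one in the boundary case; alternatively, since the lemma is invoked in Section~\ref{betaUpperDuplication} with a wide margin, the exact constant is immaterial to the downstream asymptotics. Beyond this bookkeeping, the argument is a clean generalization of Lemma~\ref{Mini invariant}: the role of \emph{identical} substrings is played by \emph{Hamming-close} substrings, and the role of the trivial counting bound $q^k$ is played by the anticode bound $M(q,k,\beta)$.
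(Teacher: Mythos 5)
Your proof is correct and follows essentially the same route as the paper's: partition $x$ into $\lfloor n/k\rfloor$ disjoint length-$k$ blocks and apply the pigeonhole principle against the code-size bound $M(q,k,\beta)$ to find two blocks at relative distance at most $\beta$. The off-by-one you flag (needing $\lfloor n/k\rfloor > M(q,k,\beta)$ rather than $\ge$) is in fact also glossed over in the paper's own proof, and as you note it is immaterial to the downstream asymptotics.
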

\begin{proof}[Proof of Lemma~\ref{Mini beta invariant}] 
 First we split $x$ into $n':=\lfloor n /k\rfloor$ equal parts of length $k$, that is, $x=(v_1,v_2,\ldots,v_{n'},v')$ and $|v_{i}|=k$, $|v'|<k$. Let $C_x(q,k,\beta)$ be the largest collection of distinct substrings among $\{v_1,\ldots, v_{n'}\}$ with the minimum relative Hamming distance $\beta$. Its cardinality is at most $M(q,k,\beta)$. As $n' > M(q,k,\beta)$  there exist two strings $b$ and $\hat{b}$ from $\{v_1,\ldots, v_{n'}\}$ so that the relative Hamming distance between them is smaller than $\beta$.  In other words, $v$ can be represented as $v=(abc\hat{b}d)$, where $|b|=|\hat{b}|=k$.
\end{proof}
Define the maximal achievable rate and the asymptotic rate as follows
\begin{align*}
&R(q,k,\beta):= \frac{\log_q M(q,k,\beta)}{k},\\
&R(q,\beta):=\limsup_{k\to\infty} R(q,k,\beta).
\end{align*}
There are some upper bounds on $R(q,\beta)$ (e.g., see~\cite[Section 4.5]{roth2006introduction}). For example, for $\beta\in[0,\frac{q-1}{q})$, the Elias-Bassalygo bound states
$$
R(q,\beta)\le 1- H_q\left(\frac{q-1}{q}\left(1 - \sqrt{1-\frac{q\beta}{q-1}}\right)\right) ,
$$
where
$$
H_q\left(x\right): = -x\log_q(x)-(1-x)\log_q(1-x)+x\log_q(q-1).
$$
Let $\overline{r}$ be some upper bound on $R(q,\beta)$. We shall prove that  for $\beta\in[0,\frac{q-1}{q})$
$$
f_{\beta}(n)\le \frac{n \overline{r}}{\log_{q} n}(1+o(1))) \text{ as }n\to\infty.
$$
First, since $R(q,\beta)\le \overline{r}$, for any $\epsilon>0$, there exists sufficiently large $k_0=k_0(\epsilon)$ so that for $k\ge k_0$, $R(q,k,\beta) < \overline{r}+\epsilon$. Lemma~\ref{Mini beta invariant} says that for any $n\ge kq^{(\overline{r}+\epsilon)k}$ and any $v$ in graph $G_{\beta}^{(n)}$ (in what follows, we assume that the length of $v$ is $n$), there is a parent $u$ of $v$ such that 
$$
|u|\le |v| - \frac{1}{\overline{r}+\epsilon}\log_{q}|v|+\frac{1}{\overline{r}+\epsilon}\log_q\left(\frac{\log_q |v|}{\overline{r}+\epsilon}\right).
$$
Therefore, for $k\ge k_0$ and $n>kq^{(\overline{r}+\epsilon)k+1}$, the smallest distance between the string $v$ and some string $u$ of length $|u|\le|v|/q$ is upper bounded by
$$
(|v|-|v|/q)\frac{\overline{r}+\epsilon}{\log_q |v|-1- \log_q\left(\frac{\log_q |v|}{\overline{r}+\epsilon}\right)}.
$$
Thus, for any $k'> k_0$,  we have
\begin{multline*}
f_\beta(n)\le q^{(\overline{r}+ \epsilon)k'+1}\\
+\sum_{i=1}^{\lfloor \log_q n \rfloor - k' + 1}
\frac{n(q-1)(\overline{r}+ \epsilon)}{q^i\left(\log_q n - i - \log_q\left(\frac{\log_q |v|}{\overline{r}}\right)\right)}.
\end{multline*}
Finally, as the last inequality happens for any $\epsilon>0$ and $n>n(\epsilon)$, we conclude with 
\begin{corollary}
For any $\beta\in[0,\frac{q-1}{q})$, we have
$$
f_\beta(n)\le \frac{n\overline{r}}{\log_q n}(1+o(1)) \text{ as } n\to\infty,
$$	
where $\overline{r}$ is an upper bound on $R(q,\beta)$.
\end{corollary}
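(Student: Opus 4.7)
The plan is to execute a greedy descent analogous to the one in Section~\ref{zeroUpperDuplication} for $f(n)$, but replacing Lemma~\ref{Mini invariant} by Lemma~\ref{Mini beta invariant}. First, I fix $\epsilon > 0$. Since $R(q,\beta) = \limsup_{k\to\infty} R(q,k,\beta) \le \overline{r}$, there exists $k_0 = k_0(\epsilon)$ so that $M(q,k,\beta) \le q^{(\overline{r}+\epsilon)k}$ for every $k \ge k_0$. Feeding this estimate into Lemma~\ref{Mini beta invariant}, any vertex $v$ in $G_\beta^{(n)}$ with $|v| \ge k q^{(\overline{r}+\epsilon)k}$ admits a parent $u$ of length at most $|v| - k$. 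Inverting the relation between $k$ and $|v|$ yields a single-step gain of roughly $\frac{1}{\overline{r}+\epsilon}\log_q |v|$ symbols, up to an unavoidable $\log\log|v|$ correction.

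Next, I iterate this descent over logarithmic scales. Fix a cutoff $k' > k_0$ and split the trajectory from length $n$ down to length at most $q^{(\overline{r}+\epsilon)k'+1}$ into scales $i = 1, 2, \ldots, \lfloor \log_q n \rfloor - k' + 1$, where scale $i$ shrinks the current length from $\approx n/q^{i-1}$ to $\approx n/q^i$. Within scale $i$ the descent costs about $\frac{(q-1)(\overline{r}+\epsilon)|v|}{q\log_q |v|}$ deduplications, producing a geometric-like series
\begin{equation*}
f_\beta(n) \le q^{(\overline{r}+\epsilon)k'+1} + \sum_{i=1}^{\lfloor \log_q n\rfloor - k' + 1} \frac{n(q-1)(\overline{r}+\epsilon)}{q^i\bigl(\log_q n - i - O(\log\log n)\bigr)},
\end{equation*}
where the constant term absorbs the tail of very short strings, which fall below the threshold for Lemma~\ref{Mini beta invariant}.

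To extract the asymptotics, I dominate the sum by its first summand, using the same ratio estimate as in Section~\ref{zeroUpperDuplication}: for $k'$ large enough, consecutive terms satisfy $a_i / a_{i+1} \ge q(k'-4)/(k'-3) > 1$, so the total differs from $a_1 = \frac{n(q-1)(\overline{r}+\epsilon)}{q\log_q n}(1+o(1))$ by a geometric factor tending to $q/(q-1)$ as $k' \to \infty$. The two together give $f_\beta(n) \le \frac{n(\overline{r}+\epsilon)}{\log_q n}(1+o(1))$; letting $\epsilon \to 0$ concludes the proof.

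I expect the main obstacle to be the careful bookkeeping needed to combine the \emph{limsup} definition of $R(q,\beta)$ with the explicit logarithmic descent: the $\log\log$ correction in the single-step gain and the $\epsilon$ slack from $R(q,k,\beta) < \overline{r}+\epsilon$ must both stay inside the $(1+o(1))$ factor uniformly across the $\Theta(\log n)$ scales, and the base case of strings shorter than $k q^{(\overline{r}+\epsilon)k_0}$ must contribute only $o(n/\log n)$. Since the algebraic skeleton is identical to the exact-duplication case, the same telescoping trick transfers with only cosmetic modifications.
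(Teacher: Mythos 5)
Your proposal is correct and follows essentially the same route as the paper: the same application of Lemma~\ref{Mini beta invariant} with $M(q,k,\beta)\le q^{(\overline{r}+\epsilon)k}$ for $k\ge k_0(\epsilon)$, the same scale-by-scale descent yielding the sum over $i=1,\ldots,\lfloor\log_q n\rfloor-k'+1$, and the same geometric domination followed by letting $\epsilon\to 0$. The bookkeeping concerns you raise (the $\log\log$ correction and the uniformity of the $\epsilon$ slack across scales) are handled in the paper exactly as you anticipate, so no further changes are needed.
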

 
Now we recall~\cite[Section 4]{roth2006introduction} the Plotkin-type bound for $\beta > \frac{q-1}{q}$
$$
M(q,k,\beta) < \frac{\beta q}{\beta q - (q-1)}.
$$
Define $c$ to be $\lceil\beta q / (\beta q - (q-1))\rceil$. We divide a $q$-ary string $v$ from $G_\beta^{(n)}$ (in what follows, we assume that the length of $v$ is $n$) into $c$ parts of the same length $k:=\lfloor n/c\rfloor$ and the remaining part, i.e., $v$ can be represented in the way $v=(v_1,\ldots, v_c, v')$ and $|v_i|=k$, $|v'|< k$. From the Plotkin-type bound, there are two strings among $\{v_1,\ldots, v_c\}$ so that the relative Hamming distance between them is at most  $\beta$. In other words, there is a parent $u$ of $v$ with 
$$
|u|\le |v|- \frac{1}{c+1}|v|.
$$
Let $q'$ be equal to $(c+1)/c$. Therefore,  we have
$$
f_\beta(n)\le (c+1) +  \sum_{i=1}^{\lfloor \log_{q'} n \rfloor} 1\le c+1 + \log_{q'} n.
$$
We conclude with 
\begin{corollary}
	For any $\frac{q-1}{q}<\beta\le 1$, we have
	$$
	f_\beta(n)\le \log_{q'} n (1+o(1))\text{ as }n\to\infty,
	$$
	where $q':= (c+1)/c $ and $c:=\lceil\beta q / (\beta q - (q-1))\rceil$. 
\end{corollary}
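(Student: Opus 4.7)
In the regime $\beta>(q-1)/q$ the Plotkin-type bound forces $M(q,k,\beta)$ to be bounded by a constant independent of $k$. The plan is to exploit this fact to chop any sufficiently long string into a constant number of equal blocks, at least two of which must be close in relative Hamming distance; deleting one such block produces a parent in $G^{(n)}_\beta$ of length smaller by a fixed fraction, and iterating yields a root path of logarithmic length.

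Concretely, I would first invoke the Plotkin-type bound $M(q,k,\beta) < \beta q/(\beta q - (q-1))$, valid for all $k$ whenever $\beta>(q-1)/q$, and set $c:=\lceil \beta q/(\beta q -(q-1))\rceil$. The bound then implies that any collection of $c$ $q$-ary blocks of a common length must contain two within relative Hamming distance at most $\beta$, simply by pigeonhole applied to any candidate code separating those blocks.

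Given a string $v$ of length $n$, I would partition it as $v=(v_1,\dots,v_c,v')$ with $|v_i|=k:=\lfloor n/c\rfloor$ and $|v'|<k$. By the previous step, there exist indices $i<j$ with $d_H(v_i,v_j)\le \beta k$; viewing the relevant portion of $v$ in the form $(abw\hat{b}d)$ with $b=v_i$ and $\hat{b}=v_j$ then exhibits a parent $u$ of $v$ in $G^{(n)}_\beta$ obtained by deleting the approximate duplicate. A short length calculation yields $|u|\le|v|-|v|/(c+1)$ once $n$ is large enough, which is a contraction by the factor $c/(c+1) = 1/q'$. Iterating this reduction at most $\lfloor\log_{q'} n\rfloor$ times shrinks the length below a fixed constant depending only on $c$, after which $O(1)$ further deduplications reach a root; summing gives $f_\beta(n)\le \log_{q'} n + O(1)$.

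I do not foresee a serious obstacle. The only mildly delicate bookkeeping concerns the tail $v'$ and the last few iterations, where the partition into $c$ equal blocks is no longer meaningful, both of which contribute only an additive $O(1)$ absorbed in the $(1+o(1))$ factor. The real structural point, and the reason one gains so much over the case $\beta<(q-1)/q$, is that the Plotkin-type bound is $k$-independent, so the per-step contraction factor $c/(c+1)$ is a fixed constant rather than a quantity tending to $1$ with the block length.
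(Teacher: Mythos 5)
Your proposal is correct and follows essentially the same route as the paper: the same Plotkin-type bound with the same constant $c$, the same partition into $c$ blocks of length $\lfloor n/c\rfloor$ plus a tail, the same per-step contraction $|u|\le |v|-|v|/(c+1)$, and the same iteration giving $\log_{q'} n + O(1)$ steps. No substantive difference to report.
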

\subsection{Lower Bound on Approximate Duplication Distance}\label{betaLowerDuplication}
The first attempt to generalize the construction from Section~\ref{zeroLowerDuplication} for the case of approximate duplication leads to the concept of error-correcting sequences~\cite{hagita2008error}. However, it appears quite difficult to prove an analogue of Lemma~\ref{Semi invariant}. In what follows, we will not provide a specific family of strings which achieve $\Theta(n/ \log n)$ distance to the root. In contrast, we prove that such a distance is $\Omega (n/ \log n)$ on average. A similar technique was used in~\cite{alon2017duplication}.

Recall that a $\beta$-approximate duplication of length $\ell$ at position $p$, $0\le p\le n-\ell$, with transposition $t$, $0\le t\le n-\ell - p$ in a string $x=(a b c d)$, with $|a|=p$, $|b|=\ell$, $|c|=t$, is defined by $f_{p,\ell,t,j}(x)=(abc\hat{b}d)$, where $\hat{b}$ is the $j$th word in the Hamming ball with center in $b$ and radius $\beta|b|$ (we assume the lexicographical order on $q$-ary words belonging to the Hamming ball). Now we observe that any path from $\Ro$ to a string from $\A_q^n$ in graph $G_{\beta}^{(n)}$ can be described by a sequence of quadruples $(p, \ell, t, j)$. Moreover, there are some constraints on quadruple. For example, if 
$$
x=(a b c d)\to y:=f_{p,\ell,t,j}(x) = (abc\hat{b}d),
$$
then 
$$
0\le p\le n-\ell, \quad 0\le t\le n-\ell - p,
$$
$$
\ell = |y|-|x|, \quad 0\le j < \sum\limits_{s=0}^{\lfloor \beta |b|\rfloor } {|b| \choose s}.
$$
Suppose that $f_{\beta}(n)\le f_\beta$ for $\beta\in [0, (q-1)/q)$, where $f_{\beta}$ is some positive integer. Since any string from $\A_q^n$ can be described (not uniquely) by a sequence of quadruples, satisfying the above constraints and having some length $f'$ not greater than $f_\beta$, and a root from $\Ro$, we shall count and upper bound the total number of such assignments by
$$
q!q\sum_{i=1}^{f_\beta} n^{2i} {n \choose i}q^{H_q(\beta)n}. 
$$
Here, we use the facts that the number of choices for $p$ and $t$ is at most $n$, the number of roots $|\Ro|\le q! q$,  the sum of $\ell$'s in the sequence of quadruples is at most $n$ and for $\beta\in[0,(q-1)/q)$, the size of the Hamming ball  with radius $\beta n$ can be  bounded by $q^{H_q(\beta)n}$ (e.g., see~\cite[Section 4]{roth2006introduction}). On the other hand, the above estimate should be greater than $q^n$. This implies the inequality
$$
q!qf_\beta n^{3f_\beta}q^{H_q(\beta)n} \ge q^n.
$$
Therefore, we come to
\begin{corollary}
	For $0<\beta <(q-1)/q$, the function
$$
f_\beta(n) \ge \frac{n(1-H_q(\beta))}{3\log_q n}(1+o(1)) \text{ as } n\to\infty,
$$
where $H_q(x)$ is the $q$-ary entropy function.
\end{corollary}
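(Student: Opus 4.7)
The plan is a pigeonhole / counting argument: I will upper-bound the number of length-$n$ strings that can be produced from a root in $\Ro$ by at most $f_\beta(n)$ approximate-duplication steps, and compare that count to $q^n$. Every single step $x=(abcd)\mapsto(abc\hat{b}d)$ is fully specified by a quadruple $(p,\ell,t,j)$ encoding the position $p$ of $b$, its length $\ell=|b|$, the transposition gap $t$, and the index $j$ of $\hat{b}$ inside the Hamming ball of radius $\beta\ell$ around $b$ (under lexicographic order, as in the definition). A derivation of length $s$ is thus described by a root in $\Ro$ (at most $q!\,q$ choices) plus an ordered $s$-tuple of such quadruples, subject to the obvious range constraints.

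I would then bound each coordinate of the quadruple count separately. The values $p,t\in\{0,\dots,n\}$ contribute a factor $n^{2s}$ across the whole path. Since the positive integers $\ell_1,\dots,\ell_s$ satisfy $\sum_i\ell_i=|v|-|\mathrm{root}(v)|\le n$, the number of admissible length sequences is at most $\binom{n}{s}\le n^s$. The one step requiring real care is bounding the $s$-fold product of Hamming-ball sizes: for $\beta<(q-1)/q$ the ball of radius $\beta\ell_i$ in $\A_q^{\ell_i}$ has size at most $q^{H_q(\beta)\ell_i}$, and the product \emph{telescopes} as
\[
\prod_{i=1}^{s}q^{H_q(\beta)\ell_i}=q^{H_q(\beta)\sum_i\ell_i}\le q^{H_q(\beta)n}.
\]
This single $q^{H_q(\beta)n}$ factor, independent of $s$, is what makes the whole argument work; using instead a per-step bound of $q^{H_q(\beta)n}$ would blow up to $q^{H_q(\beta)n\cdot f_\beta(n)}$ and yield a vacuous inequality. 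Combining the three factors and summing over $s=1,\dots,f_\beta(n)$ gives
\[
q^n\le q!\,q\cdot f_\beta(n)\cdot n^{3 f_\beta(n)}\cdot q^{H_q(\beta)n}.
\]

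To finish, I will take $\log_q$ of both sides and solve for $f_\beta(n)$:
\[
3 f_\beta(n)\log_q n\ge n\bigl(1-H_q(\beta)\bigr)-\log_q f_\beta(n)-\log_q(q!\,q).
\]
The trivial bound $f_\beta(n)\le n$ (every step adds at least one symbol) makes the subtracted terms $O(\log n)$, hence negligible against the leading $n(1-H_q(\beta))$ term, and they collapse into a $(1+o(1))$ factor after dividing by $3\log_q n$. The main intellectual obstacle, as noted, is recognizing the telescoping of Hamming-ball volumes along the derivation path; apart from that, the proof is routine counting and asymptotic bookkeeping, and the hypothesis $\beta<(q-1)/q$ enters exactly to guarantee $1-H_q(\beta)>0$ so that the resulting lower bound is nontrivial.
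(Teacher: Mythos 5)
Your proposal is correct and follows essentially the same counting argument as the paper: describing each derivation by a root plus a sequence of quadruples $(p,\ell,t,j)$, bounding the total number of such descriptions by $q!\,q\,f_\beta\, n^{3f_\beta}q^{H_q(\beta)n}$ via the telescoping $\sum_i \ell_i \le n$, and comparing against $q^n$. The only difference is presentational: you make the telescoping of Hamming-ball volumes explicit, which the paper leaves implicit.
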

For $\beta > 1 - 1/q$, let us recall the arguments. Since for any $v$, the length of a parent of $v$ is at least half of $|v|$, we have $f_{\beta}(n) \ge \log_2 (n/q)$.  This completes the proof of Theorem~\ref{betaDuplication}.
\section{Conclusion}\label{conclusion}
In this paper, we study the duplication with transposition operation which may occur in a genome sequence. The basic question of our research is related to the maximal distance in graph $G^{(n)}$ between the set of roots $\Ro$ and $q$-ary strings of length at most $n$.  

We showed in Theorem~\ref{zeroDuplication} that the maximal distance, denoted by $f(n)$, satisfies the inequality $0.5 n/ \log_q n  \lesssim f(n)\lesssim n/ \log_q n$. Also, we proved in Theorem~\ref{betaDuplication} that for the case of $\beta$-approximate duplication, the maximal distance $f_\beta(n)$ has a sharp transition from $n/\log n$ order to $\log n$ at $\beta = (q-1)/q$. The behaviour of $f_{(q-1)/q}(n)$ for large values of $n$ remains an open question.
\section*{Acknowledgment}
We thank Yiwei Zhang for the introduction to the problem of the duplication distance to the root for the binary strings and Gregory Kucherov for the fruitful discussion on Dejean's conjecture. 

I. Vorobyev was supported in part by the Russian Foundation for Basic Research through grant no.~\mbox{20-01-00559}. N. Polyanskii was supported in part by the European Research Council under the EU’s Horizon 2020 research and innovation programme (grant No. 801434).
\bibliographystyle{IEEEtran}
\bibliography{duplication}
\end{document}